% LaTeX/AMS-LaTeX

\documentclass [11pt,a4paper] {article}

\usepackage[cp1252]{inputenc}

\usepackage{amssymb}
\usepackage{amsmath}
\usepackage{amsfonts,amssymb}
\usepackage[dvips]{graphicx}
\usepackage{bbm}
\usepackage{enumerate}

\usepackage{amsthm}

\DeclareMathAlphabet{\mathpzc}{OT1}{pzc}{m}{it}

\setlength{\hoffset}{-1.5cm}
\setlength{\voffset}{-1.0cm}
\setlength{\textwidth}{16.5cm}
\setlength{\textheight}{22.0cm}
\linespread{1}

\def\SmallColSep{\setlength{\arraycolsep}{1pt}}

\newtheorem{observation}{Observation} %Define Observation
\newtheorem{corollary}{Corollary} %Define Corollary

\begin{document}

\title{Contextuality and the fundamental theorem of noncommutative algebra}

\author{Arkady Bolotin\footnote{$Email: arkadyv@bgu.ac.il$\vspace{5pt}} \\ \textit{Ben-Gurion University of the Negev, Beersheba (Israel)}}

\maketitle

\begin{abstract}\noindent In the paper it is shown that the Kochen-Specker theorem follows from Burnside's theorem on noncommutative algebras. Accordingly, contextuality (as an impossibility of assigning binary values to projection operators independently of their contexts) is merely an inference from Burnside's fundamental theorem of the algebra of linear transformations on a Hilbert space of finite dimension.\\

\noindent \textbf{Keywords:} Quantum mechanics; Kochen-Specker theorem; Contextuality; Truth values; Invariant subspaces; Irreducible; Noncommutative algebra; Burnside’s Theorem\\
\end{abstract}

\section{Introduction}  %{I}

\noindent Consider the set of the propositions $\{\diamond\}$ where the symbol $\diamond$ stands for any proposition, compound or simple.\\

\noindent Let $v_C$ be a truth-value assignment function that denotes \textit{a truth valuation in a circumstance $C$}, that is, a mapping from the set $\{\diamond\}$ to the set of truth-values $\{\mathfrak{v}_i\}_{i=1}^N$ (where $N \ge 2$) relative to a circumstance of evaluation indicated by $C$:\smallskip

\begin{equation} \label{1} %{Eq.1}
   v_C:
   \{\diamond\}
   \rightarrow
   \{\mathfrak{v}_i\}_{i=1}^N
   \;\;\;\;  .
\end{equation}
\smallskip

\noindent Commonly (see, e.g., \cite{Dalen}), the image of $\diamond$ under $v_C$ is written using the double-bracket notation, namely,\smallskip

\begin{equation} \label{2} %{Eq.2}
   v_C(\diamond)
   =
   {[\![ \diamond ]\!]}_C
   \;\;\;\;  .
\end{equation}
\smallskip

\noindent The truth-value assignment function $v_C$ expresses the notion of \textit{not-yet-verified truth values}: It specifies in advance the truth-value obtained from the verification of a proposition.\\

\noindent To relate the set $\{\diamond\}$ to the states of a physical system, one can employ \textit{a predicate} -- i.e., a statement whose truth value depends on the values of its variables. For example, in the case of a system associated with the classical phase space $\Gamma$, the predicate $P$ on $\Gamma$ can be defined as a function from the phase space to the set of truth-values:\smallskip

\begin{equation} \label{3} %{Eq.3}
   P:
   \Gamma
   \rightarrow
   \{\mathfrak{v}_i\}_{i=1}^N
   \;\;\;\;  .
\end{equation}
\smallskip

\noindent Let $\{P_\diamond\}$ denote the set of the predicates uniquely (i.e., one-to-one) connected to the set of the propositions $\{\diamond\}$. Then, one can introduce the valuation equivalence\smallskip

\begin{equation} \label{4} %{Eq.4}
   v_{\gamma}(P_\diamond)
   =
   {[\![ \diamond ]\!]}_{\gamma}
   \;\;\;\;  ,
\end{equation}
\smallskip

\noindent which signifies that the truth-value of the proposition $\diamond$ in the state $\gamma \in \Gamma$ is equated with the value of the corresponding predicate $P_{\diamond}$ obtained in this state, i.e., $v_{\gamma}(P_{\diamond}) = P_{\diamond}(\gamma)$.\\

\noindent Provided that in the case of a classical system a predicate is just an indicator function that only takes the values 0 or 1 (where 0 denotes the falsity and 1 denotes the truth), explicitly,\smallskip

\begin{equation} \label{5} %{Eq.5}
   v_{\gamma}(P_\diamond)
   =
   \left\{
      \begin{array}{l}
         1, \;\;\;\;  \gamma \in U_{\diamond}\\
         0, \;\;\;\;  \gamma \in V_{\diamond}
      \end{array}
   \right.
   \;\;\;\;  ,
\end{equation}
\smallskip

\noindent in which $U_{\diamond}$ and $V_{\diamond}$ are some linear subspaces of $\Gamma$ such that $\Gamma = U_{\diamond} \oplus V_{\diamond}$, the relation between the set of the predicates and the set of truth-values is a bivaluation:\smallskip

\begin{equation} \label{6} %{Eq.6}
   v_{\gamma}:
   \{P_\diamond\}
   \rightarrow
   \{0,1\}
   \;\;\;\;  .
\end{equation}
\smallskip

\noindent Accordingly, the elements $\gamma$ of the classical phase space $\Gamma$ represent categorical properties that the classical system possesses or does not. What is more, the bivaluation relation (\ref{6}) is \textit{a total function}. This means that any proposition related to a classical system obeys \textit{the principle of bivalence} (asserting that a proposition can be either true or false \cite{Beziau}).\\

\noindent To define the truth-value assignment for a quantum system associated with a Hilbert space $\mathcal{H}$, one can assume the valuation equivalence analogous to (\ref{4})\smallskip

\begin{equation} \label{7} %{Eq.7}
   v_{|\Psi\rangle}(\hat{P}_{\diamond})
   =
   {[\![ \diamond ]\!]}_{|\Psi\rangle}
   \;\;\;\;  ,
\end{equation}
\smallskip

\noindent where $\hat{P}_{\diamond}$ denotes a projection operator on $\mathcal{H}$ uniquely connected with a proposition $\diamond$, while $|\Psi\rangle$ stands for a vector in $\mathcal{H}$ describing system's state. In line with this equivalence, the truth value of the proposition $\diamond$ in the state $|\Psi\rangle$ is equated with the value of the corresponding projection operator $\hat{P}_{\diamond}$ obtained in this state.\\

\noindent As it can be readily seen, the difference between the equivalence (\ref{7}) and its classical counterpart (\ref{4}) is not only one that in the former the argument of the value assignment function is an operator on a Hilbert space $\mathcal{H}$ (instead of a predicate on the classical phase space $\Gamma$ in the latter) but also (and more importantly) one that the relation between the set $\{\hat{P}_{\diamond}\}$ and the set $\{0,1\}$, namely,\smallskip

\begin{equation} \label{8} %{Eq.8}
   v_{|\Psi\rangle}:
   \{\hat{P}_{\diamond}\}
   \rightarrow
   \{0,1\}
   \;\;\;\;  ,
\end{equation}
\smallskip

\noindent cannot be a total function in accordance with the Kochen-Specker theorem \cite{Kochen, Peres}. This means that at least one proposition related to a quantum system does not obey the principle of bivalence: The said proposition may have a truth-value different from 0 and 1 (as it is argued in \cite{Pykacz95, Pykacz15}) or no truth-value at all (in line with the supervaluation approach suggested in \cite{Bolotin}).\\

\noindent Assuming that the quantum value assignment function $v_{|\Psi\rangle}$ can be presented as an indicator function similarly to the case of a classical system, that is,\smallskip

\begin{equation} \label{9} %{Eq.9}
   v_{|\Psi\rangle}(\hat{P}_{\diamond})
   =
   \left\{
      \begin{array}{l}
         1, \;\;\;\;  |\Psi\rangle \in U_{\diamond}\\
         0, \;\;\;\;  |\Psi\rangle \in V_{\diamond}
      \end{array}
   \right.
   \;\;\;\;  ,
\end{equation}
\smallskip

\noindent where $U_{\diamond}$ and $V_{\diamond}$ are some linear subspaces in $\mathcal{H}$ such that $\mathcal{H} = U_{\diamond} \oplus V_{\diamond}$, the question is, what algebraic properties of the linear subspaces in $\mathcal{H}$ cause the failure of the principle of bivalence? Correspondingly, can the Kochen-Specker theorem be derived from the algebra over $\mathcal{H}$?\\

\noindent Let us answer these questions in the presented paper.\\

\section{Invariant subspaces for projection operators}  %{II}

\noindent Recall the following definitions. \textit{The column space} (a.k.a. \textit{range}), $\mathrm{ran}(\hat{P})$, of the projection (i.e., self-adjoint and idempotent) operator $\hat{P}$ is the subset of the vectors $|\Psi\rangle \in \mathcal{H}$ that are in the image of $\hat{P}$, namely,\smallskip

\begin{equation} \label{10} %{Eq.10}
   \mathrm{ran}(\hat{P})
   =
   \left\{
      |\Psi\rangle \in \mathcal{H}
      :
      \;
      \hat{P}|\Psi\rangle = |\Psi\rangle
   \right\}
   \;\;\;\;  .
\end{equation}
\smallskip

\noindent Likewise, \textit{the null space} (a.k.a. \textit{kernel}), $\mathrm{ker}(\hat{P})$, of the projection operator $\hat{P}$ is the subset of the vectors $|\Psi\rangle \in \mathcal{H}$ that are mapped to zero by $\hat{P}$, namely,\smallskip

\begin{equation} \label{11} %{Eq.11}
   \mathrm{ker}(\hat{P})
   =
   \left\{
      |\Psi\rangle \in \mathcal{H}
      :
      \;
      \hat{P}|\Psi\rangle = 0
   \right\}
   \;\;\;\;  .
\end{equation}
\smallskip

\noindent Thus, any $\hat{P}$ is the identity operator $\hat{1}$ on $\mathrm{ran}(\hat{P})$ and the zero operator $\hat{0}$ on $\mathrm{ker}(\hat{P})$.\\

\noindent The column and null spaces are complementary in the same way as $\hat{P}$ and $\hat{1} - \hat{P}$, that is,\smallskip

\begin{equation} \label{12} %{Eq.12}
   \mathrm{ran}(\hat{P})
   =
   \mathrm{ker}(\hat{1} - \hat{P})
   \;\;\;\;  ,
\end{equation}

\begin{equation} \label{13} %{Eq.13}
   \mathrm{ker}(\hat{P})
   =
   \mathrm{ran}(\hat{1} - \hat{P})
   \;\;\;\;  .
\end{equation}
\smallskip

\noindent Moreover, they produce the direct sum\smallskip

\begin{equation} \label{14} %{Eq.14}
   \mathrm{ran}(\hat{P})
   \oplus
   \mathrm{ran}(\hat{P})
   =
   \mathrm{ran}(\hat{1})
   =
   \mathcal{H}
   \;\;\;\;  ,
\end{equation}
\smallskip

\noindent and they are orthogonal to each other:\smallskip

\begin{equation} \label{15} %{Eq.15}
   \mathrm{ran}(\hat{P})
   \cap
   \mathrm{ker}(\hat{P})
   =
   \mathrm{ran}(\hat{0})
   =
   \{ 0\}
   \;\;\;\;  ,
\end{equation}
\smallskip

\noindent where 0 denotes the zero vector in any vector space and $\{ 0\}$ stands for the zero subspace. Thus, $\mathrm{ker}(\hat{P})$ is the orthogonal complement of $\mathrm{ran}(\hat{P})$, and vice versa.\\

\noindent Also recall that a subspace $U \subseteq \mathcal{H}$ is called \textit{an invariant subspace under} $\hat{P}$ if\smallskip

\begin{equation} \label{16} %{Eq.16}
   |\Psi\rangle \in U
   \implies
   \hat{P}|\Psi\rangle \in U
   \;\;\;\;  ,
\end{equation}
\smallskip

\noindent that is, $\hat{P}(U)$ is contained in $U$ and so\smallskip

\begin{equation} \label{17} %{Eq.17}
   \hat{P}:
   U
   \rightarrow
   U
   \;\;\;\;  .
\end{equation}
\smallskip

\noindent Obviously, the space $\mathcal{H}$ itself as well as the zero subspace $\{0\}$ are \textit{trivially invariant subspaces} for any projection operator $\hat{P}$.\smallskip

\begin{observation}
For each projection operator $\hat{P}$ there are two nontrivial invariant subspaces, namely, $\mathrm{ran}(\hat{P})$ and $\mathrm{ker}(\hat{P})$.
\end{observation}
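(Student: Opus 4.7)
The plan is to verify the invariance condition (\ref{16}) directly for each of the two candidate subspaces by plugging the defining relations (\ref{10}) and (\ref{11}) into it, and then to establish non-triviality using the orthogonal-complement structure recorded in (\ref{14})--(\ref{15}).

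First I would handle $\mathrm{ran}(\hat{P})$. Take any $|\Psi\rangle \in \mathrm{ran}(\hat{P})$; by the defining equation (\ref{10}) one has $\hat{P}|\Psi\rangle = |\Psi\rangle$, and the right-hand side clearly lies back in $\mathrm{ran}(\hat{P})$, so the implication (\ref{16}) is satisfied with $U = \mathrm{ran}(\hat{P})$. Second, I would treat $\mathrm{ker}(\hat{P})$. For any $|\Psi\rangle \in \mathrm{ker}(\hat{P})$, equation (\ref{11}) gives $\hat{P}|\Psi\rangle = 0$, and since the zero vector belongs to every linear subspace (in particular to $\mathrm{ker}(\hat{P})$ itself, as noted right after (\ref{15})), the implication (\ref{16}) again holds. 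Hence both $\mathrm{ran}(\hat{P})$ and $\mathrm{ker}(\hat{P})$ are invariant under $\hat{P}$ in the sense of (\ref{17}).

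It remains to argue that these two invariant subspaces are \emph{nontrivial}, i.e.\ distinct from $\{0\}$ and from $\mathcal{H}$; this of course presupposes that $\hat{P}$ is neither $\hat{0}$ nor $\hat{1}$, which I would state as a standing hypothesis. Under this hypothesis, $\mathrm{ran}(\hat{P})\neq\{0\}$ (otherwise $\hat{P}$ would annihilate all of $\mathcal{H}$ by (\ref{14}) and hence equal $\hat{0}$) and $\mathrm{ran}(\hat{P})\neq\mathcal{H}$ (otherwise $\hat{P}$ would act as the identity on every vector and hence equal $\hat{1}$). By the orthogonal-complement relation following (\ref{15}), the same two exclusions hold for $\mathrm{ker}(\hat{P})$.

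There is no real obstacle here: the argument is essentially a one-line unpacking of the definitions (\ref{10}) and (\ref{11}) against (\ref{16}). The only mild subtlety worth flagging explicitly in the write-up is the implicit non-degeneracy assumption $\hat{P}\notin\{\hat{0},\hat{1}\}$, without which the word ``nontrivial'' in the observation would be vacuous for the degenerate projectors.
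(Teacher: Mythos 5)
Your proposal is correct and follows essentially the same route as the paper's own proof: verify the invariance condition directly from the defining equations for $\mathrm{ran}(\hat{P})$ and $\mathrm{ker}(\hat{P})$. The one difference is that you also justify the word \emph{nontrivial} by adding the standing hypothesis $\hat{P}\notin\{\hat{0},\hat{1}\}$, a point the paper's proof silently omits; this is a worthwhile addition rather than a divergence in method.
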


\begin{proof}
To see this, let $|\Psi\rangle \in \mathrm{ran}(\hat{P})$. Since $\hat{P}|\Psi\rangle = |\Psi\rangle$ one gets $\hat{P}|\Psi\rangle \in \mathrm{ran}(\hat{P})$, and so $\hat{P}: \mathrm{ran}(\hat{P}) \rightarrow \mathrm{ran}(\hat{P})$. Similarly, let $|\Psi\rangle \in \mathrm{ker}(\hat{P})$. This means that $\hat{P}|\Psi\rangle = 0$. On the other hand, $0 \in \mathrm{ker}(\hat{P})$, which implies $\hat{P}: \mathrm{ker}(\hat{P}) \rightarrow \mathrm{ker}(\hat{P})$.
\end{proof}
\smallskip

\noindent The presence of two invariant subspaces for each projection operator $\hat{P}$ motivates the definition of the valuation $v_{|\Psi\rangle}$ as \textit{a bivalent function}, that is,\smallskip 

\begin{equation} \label{18} %{Eq.18}
   v_{|\Psi\rangle}(\hat{P})
   =
   \left\{
      \begin{array}{l}
         1, \;\;\;\;  |\Psi\rangle \in \mathrm{ran}(\hat{P})\\
         0, \;\;\;\;  |\Psi\rangle \in \mathrm{ker}(\hat{P})
      \end{array}
   \right.
   \;\;\;\;  .
\end{equation}
\smallskip

\noindent Take the identity operator $\hat{1}$. Given $\mathrm{ran}(\hat{1}) = \mathcal{H}$ and $\mathrm{ker}(\hat{1}) = \{0\}$, from the said definition it follows that for any $|\Psi\rangle \in \mathcal{H}$ and $|\Psi\rangle \neq 0$, i.e., for any admissible state $|\Psi\rangle$ of a system, $v_{|\Psi\rangle}(\hat{1}) = 1$. This indicates that the identity operator $\hat{1}$ relates to \textit{a tautology} $\top$ (i.e., a proposition that is true in any admissible state of the system), namely, $v_{|\Psi\rangle}(\hat{1}) = {[\![ \top ]\!]}_{|\Psi\rangle} = 1$.\\

\noindent For zero operator $\hat{0}$, one gets in accordance with (\ref{18}) that in any admissible state of the system, $v_{|\Psi\rangle}(\hat{0}) = 0$. This implies that the zero operator $\hat{0}$ relates to \textit{a contradiction} $\bot$ (i.e., a proposition that is false in any admissible state of the system): $v_{|\Psi\rangle}(\hat{0}) = {[\![ \bot ]\!]}_{|\Psi\rangle} = 0$.\\

\section{Burnside’s theorem on the noncommutative algebra}  %{III}

\noindent Let $L(\mathcal{H})$ denote the algebra of linear transformations on $\mathcal{H}$ and let $\Sigma$ represents the collection of projection operators on $\mathcal{H}$. Consider a nonempty subset $\Sigma^{(q)} \subset \Sigma$ comprising projection operators $\hat{P}_i^{(q)}$ that meet the conditions\smallskip

\begin{equation} \label{19} %{Eq.19}
   \hat{P}_i^{(q)} \hat{P}_j^{(q)}
   =
   \hat{P}_j^{(q)} \hat{P}_i^{(q)}
   =
   \hat{0}
   \;\;\;\;  ,
\end{equation}
\smallskip

\noindent where $i \neq j$, and\smallskip

\begin{equation} \label{20} %{Eq.20}
   \sum_{\hat{P}_i^{(q)} \in \, \Sigma^{(q)}}
      \hat{P}_i^{(q)}
   =
   \hat{1}
   \;\;\;\;  .
\end{equation}
\smallskip

\noindent Such a subset $\Sigma^{(q)}$ is said to be \textit{a maximal} (a.k.a. \textit{complete}) \textit{context}.\\

\noindent Let $\mathrm{Lat}(\hat{P}_i^{(q)})$ be the family of subspaces invariant under the projection operator $\hat{P}_i^{(q)}$, namely,\smallskip

\begin{equation} \label{21} %{Eq.21}
   \mathrm{Lat}(\hat{P}_i^{(q)})
   =
   \left\{
      \mathrm{ran}(\hat{0})
      ,\,
      \mathrm{ran}(\hat{P}_i^{(q)})
      ,\,
      \mathrm{ker}(\hat{P}_i^{(q)})
      ,\,
      \mathrm{ran}(\hat{1})
   \right\}
   \;\;\;\;  ,
\end{equation}
\smallskip

\noindent such that $\mathrm{Lat}(\hat{P}_i^{(q)})$ forms \textit{a lattice}: The operation meet $\sqcap$ of this lattice corresponds to the interception $\mathcal{Q} \cap \mathcal{W}$ and the lattice operation join $\sqcup$ corresponds to the smallest closed subspace of $\mathrm{Lat}(\hat{P}_i^{(q)})$ containing the union $\mathcal{Q} \cup \mathcal{W}$, where $\mathcal{Q} \neq \mathcal{W}$ and $\mathcal{Q}, \mathcal{W} \in \mathrm{Lat}(\hat{P}_i^{(q)})$. This lattice is bounded, i.e., it has the greatest element $\mathrm{ran}(\hat{1}) = \mathcal{H}$ and the least element $\mathrm{ran}(\hat{0}) = \{0\}$.\\

\noindent Now, consider the invariant subspaces invariant under each projection operator $\hat{P}_i^{(q)}$ in the maximal context $\Sigma^{(q)}$:\smallskip

\begin{equation} \label{22} %{Eq.22}
   \mathrm{Lat}(\Sigma^{(q)})
   =
   \bigcap_{\hat{P}_i^{(q)} \in \, \Sigma^{(q)}}
      \mathrm{Lat}(\hat{P}_i^{(q)})
   \;\;\;\;  .
\end{equation}
\smallskip

\noindent Given that for any maximal context the following interceptions hold\smallskip

\begin{equation} \label{23} %{Eq.23}
   \mathrm{ran}(\hat{P}_i^{(q)})
   \cap
   \mathrm{ran}(\hat{P}_j^{(q)})
   =
   \mathrm{ran}(\hat{P}_i^{(q)}\hat{P}_j^{(q)})
   =
   \{ 0\}
   \;\;\;\;  ,
\end{equation}

\begin{equation} \label{24} %{Eq.24}
   \mathrm{ran}(\hat{P}_i^{(q)})
   \cap
   \mathrm{ker}(\hat{P}_j^{(q)})
   =
   \mathrm{ran}(\hat{P}_i^{(q)})
   \cap
   \mathrm{ran}(\hat{1} - \hat{P}_j^{(q)})
   =
   \mathrm{ran}(\hat{P}_i^{(q)})
   \;\;\;\;  ,
\end{equation}

\begin{equation} \label{25} %{Eq.25}
   \mathrm{ker}(\hat{P}_i^{(q)})
   \cap
   \mathrm{ran}(\hat{P}_j^{(q)})
   =
   \mathrm{ran}(\hat{1} - \hat{P}_i^{(q)})
   \cap
   \mathrm{ker}(\hat{P}_j^{(q)})
   =
   \mathrm{ran}(\hat{P}_j^{(q)})
   \;\;\;\;  ,
\end{equation}

\begin{equation} \label{26} %{Eq.26}
   \mathrm{ker}(\hat{P}_i^{(q)})
   \cap
   \mathrm{ker}(\hat{P}_j^{(q)})
   =
   \mathrm{ran}(\hat{1} - \hat{P}_i^{(q)} - \hat{P}_j^{(q)})
   =
   \mathrm{ran}
      \bigg(
            \sum_{k \neq i,j}
               \hat{P}_k^{(q)}\!
      \bigg)\!
   \;\;\;\;  ,
\end{equation}
\smallskip

\noindent the interception $\mathrm{Lat}(\Sigma^{(q)})$ can be presented as\smallskip

\begin{equation} \label{27} %{Eq.27}
   \mathrm{Lat}(\Sigma^{(q)})
   =
   \bigg\{
      \mathrm{ran}(\hat{0})
      ,
      \left\{
         \mathrm{ran}(\hat{P}_i^{(q)})
         ,
         \mathrm{ker}(\hat{P}_i^{(q)})
      \right\}_{i=1}
      ,
      \mathcal{R}^{(q)}
      ,
      \mathrm{ran}(\hat{1})
   \bigg\}
   \;\;\;\;  ,
\end{equation}
\smallskip

\noindent where $\mathcal{R}^{(q)}$ stands for\smallskip

\begin{equation} \label{28} %{Eq.28}
   \mathcal{R}^{(q)}
   =
   \Bigg\{
   \mathrm{ran}
      \bigg(
            \sum_{k \neq i,j}
               \hat{P}_k^{(q)}\!
      \bigg)
      ,\,
   \mathrm{ran}
      \bigg(
            \sum_{j \neq i,k}
               \hat{P}_j^{(q)}\!
      \bigg)
   ,\,
   \mathrm{ran}
      \bigg(
            \sum_{i \neq j,k}
               \hat{P}_i^{(q)}\!
      \bigg)
   ,\,
   \mathrm{ran}
      \bigg(
            \sum_{l \neq i,j,k}
               \hat{P}_l^{(q)}\!
      \bigg)
   ,\,
   \dots   
   \Bigg\}
   \;\;\;\;  .
\end{equation}
\smallskip

\noindent Suppose that the system is prepared in a pure state $|\Psi\rangle \in \mathrm{ran}(\hat{P}_i^{(q)})$. Then, according to (\ref{18}), $v_{|\Psi\rangle}(\hat{P}_i^{(q)}) = 1$. As a result of (\ref{23}) (expressing that all $\mathrm{ran}(\hat{P}_i^{(q)})$ are orthogonal to each other), the vector $|\Psi\rangle$ also resides in the null space of any other projection operator in the maximal context $\Sigma^{(q)}$, i.e., $|\Psi\rangle \in \mathrm{ker}(\hat{P}_j^{(q)})$, which gives $v_{|\Psi\rangle}(\hat{P}_j^{(q)}) = 0$. Hence, in the maximal context $\Sigma^{(q)}$ only one projection operator can be assigned the value 1, and so\smallskip

\begin{equation} \label{29} %{Eq.29}
   v_{|\Psi\rangle}
   \bigg(
   \sum_{\hat{P}_i^{(q)} \in \, \Sigma^{(q)}}
      \hat{P}_i^{(q)}\!
   \bigg)
   =
   \sum_{\hat{P}_i^{(q)} \in \, \Sigma^{(q)}}
      v_{|\Psi\rangle}\!\!
      \left(
         \hat{P}_i^{(q)}
      \right)
   =
   1
   \;\;\;\;  .
\end{equation}
\smallskip

\noindent Consider the invariant subspaces invariant under each maximal context $\Sigma^{(q)} \subset \Sigma$, that is,\smallskip

\begin{equation} \label{30} %{Eq.30}
   \mathrm{Lat}(\Sigma)
   =
   \bigcap_{\Sigma^{(q)} \subset \Sigma}
      \mathrm{Lat}(\Sigma^{(q)})
   \;\;\;\;  .
\end{equation}

\begin{observation}
Where $\mathrm{Lat}(\Sigma)$ to contain some nontrivial invariant subspace(s), a logic defined as the relations between projection operators in $\Sigma$ would have a bivalent semantics.
\end{observation}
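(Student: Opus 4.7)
The plan is to use a nontrivial common invariant subspace $U\in\mathrm{Lat}(\Sigma)$ as the support of a classical bivaluation. The valuation (\ref{18}) becomes a total function on $\Sigma$ as soon as the chosen state $|\Psi\rangle$ is a common eigenvector of every $\hat{P}\in\Sigma$, and such a common eigenvector spans a one-dimensional common invariant subspace; so my goal is to extract a one-dimensional invariant subspace out of the assumed nontrivial $U$. First I would fix $U\in\mathrm{Lat}(\Sigma)$ with $\{0\}\neq U\neq\mathcal{H}$. Since every $\hat{P}\in\Sigma$ is self-adjoint and leaves $U$ invariant, each restriction $\hat{P}|_U$ is itself a projection on $U$, and one has the decomposition $U=(U\cap\mathrm{ran}(\hat{P}))\oplus(U\cap\mathrm{ker}(\hat{P}))$ for every $\hat{P}\in\Sigma$.

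Next I would induct on $\dim U$. In the base case $\dim U=1$ each $\hat{P}|_U$ is either $\hat{0}$ or $\hat{1}$, so any nonzero $|\Psi\rangle\in U$ automatically lies in $\mathrm{ran}(\hat{P})$ or in $\mathrm{ker}(\hat{P})$ for every $\hat{P}\in\Sigma$, and $v_{|\Psi\rangle}$ becomes total on $\Sigma$. Moreover, in every maximal context $\Sigma^{(q)}$ exactly one projector is then assigned the value $1$, so the sum rule (\ref{29}) is respected and the resulting two-valued function provides a consistent bivalent semantics for the logic of $\Sigma$. For the inductive step I would look for a further nontrivial invariant subspace of the restricted family $\{\hat{P}|_U:\hat{P}\in\Sigma\}$ strictly inside $U$ and iterate; the finiteness of $\dim\mathcal{H}$ would then force the recursion to terminate at the base case.

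The principal obstacle is precisely this inductive step: the restricted family $\{\hat{P}|_U\}$ may act irreducibly on $U$ while $\dim U>1$, and then Burnside's theorem --- the very tool invoked in the next section for the Kochen--Specker argument --- identifies the algebra it generates with the whole of $L(U)$ and blocks any further reduction. To close the gap one would need either to invoke a property special to projection operators (for instance, an argument excluding the possibility of an irreducible family of projectors on a space of dimension $\geq 2$), or else to adopt a looser reading of ``bivalent semantics'' under which the mere existence of a higher-dimensional common invariant subspace already suffices --- for example, because the restriction $\hat{P}|_U$ of every $\hat{P}\in\Sigma$ could then be forced to equal $\hat{0}$ or $\hat{1}$ on $U$, so that states in $U$ are already eigenvectors of each $\hat{P}$ without any further descent.
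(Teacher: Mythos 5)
You have correctly located the gap in your own plan, and it is a genuine one: the inductive descent to a one-dimensional common invariant subspace can fail because the restricted family $\{\hat{P}|_U\}$ may act irreducibly on a $U$ with $\dim U\ge 2$, and this is not a removable technicality. Concretely, let $\mathcal{H}=\mathbb{C}^2\otimes\mathbb{C}^2$ and let $\Sigma$ consist of the contexts $\{P\otimes\hat{1},\,(\hat{1}-P)\otimes\hat{1}\}$ with $P$ ranging over the projections of $\mathbb{C}^2$. Then $U=\mathbb{C}^2\otimes\mathrm{span}(e_1)$ (with $e_1$ a unit vector of the second factor) is a nontrivial subspace satisfying (\ref{16}) for every member of $\Sigma$, yet it contains no common eigenvector, and a generic $|\Psi\rangle=\psi\otimes e_1\in U$ lies neither in $\mathrm{ran}(P\otimes\hat{1})$ nor in $\mathrm{ker}(P\otimes\hat{1})$, so the valuation (\ref{18}) is not even total on states drawn from $U$. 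With the standard notion of invariant subspace, then, the observation cannot be established along your lines --- indeed it would be false.

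What rescues the statement, and what the paper's own proof tacitly relies on, is the restrictive definition (\ref{21}): $\mathrm{Lat}(\hat{P})$ is declared to contain only $\{0\}$, $\mathrm{ran}(\hat{P})$, $\mathrm{ker}(\hat{P})$ and $\mathcal{H}$, not every subspace satisfying (\ref{16}). A nontrivial member of $\mathrm{Lat}(\Sigma)=\bigcap_q\mathrm{Lat}(\Sigma^{(q)})$ must therefore literally coincide, for every maximal context, with a column space, a null space, or one of the partial sums in (\ref{28}). The paper's proof accordingly takes the subspace to be $\mathrm{ran}(\hat{P}_i^{(q)})$, argues that membership in every $\mathrm{Lat}(\Sigma^{(w)})$ forces it to be orthogonal to all the other column spaces, and concludes that a state prepared in it receives the value $1$ on exactly one projector of each context and $0$ on the rest, i.e., a total bivaluation obeying (\ref{29}). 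This is exactly the ``looser reading'' you float in your final sentence; under it every restriction $\hat{P}|_U$ is automatically $\hat{0}$ or $\hat{1}$ and no induction is needed. Your opening decomposition $U=(U\cap\mathrm{ran}(\hat{P}))\oplus(U\cap\mathrm{ker}(\hat{P}))$, while correct, is precisely what allows $U$ to be invariant without being an eigenspace, and so it steers you toward the version of the claim that fails.
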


\begin{proof}
Suppose that $\mathrm{Lat}(\Sigma)$ contains a nontrivial invariant subspace $\mathrm{ran}(\hat{P}_i^{(q)})$. Since $\mathrm{Lat}(\Sigma)$ is the intersection of all the lattices $\mathrm{Lat}(\Sigma^{(q)})$, this would mean that $\mathrm{ran}(\hat{P}_i^{(q)})$ is the member of each $\mathrm{Lat}(\Sigma^{(q)})$ and thus orthogonal to all other column spaces in each $\Sigma^{(q)}$. In the case where $v_{|\Psi\rangle}(\hat{P}_i^{(q)}) = 1$, all other truth values of the projection operators in $\Sigma$ would be zero, which would produce $\sum_{i} v_{|\Psi\rangle}(\hat{P}_i^{(q)}) = 1$ for each $\Sigma^{(q)}$. In this way, all the projection operators in $\Sigma$ would obey the principle of bivalence.
\end{proof}
\smallskip

\noindent Assume $\Sigma = L(\mathcal{H})$, i.e., the collection $\Sigma$ includes all the projection operators on system’s Hilbert space $\mathcal{H}$. If $\mathcal{H}$ is finite-dimensional (and $\mathrm{dim}(\mathcal{H})$ is greater than 1), then, according to \textit{Burnside's Theorem} \cite{Burnside, Rosenthal, Lomonosov}, $\mathrm{Lat}(\Sigma)$ is \textit{irreducible}, i.e., has no nontrivial invariant subspace:\smallskip

\begin{equation} \label{31} %{Eq.31}
   \Sigma
   =
   L(\mathcal{H})
   \;
   \implies
   \;
   \mathrm{Lat}(\Sigma)
   =
   \left\{
      \mathrm{ran}(\hat{0})
      ,
      \mathrm{ran}(\hat{1})
   \right\}
   \;\;\;\;  .
\end{equation}

\begin{observation}
This means that for the given system the principle of bivalence fails.
\end{observation}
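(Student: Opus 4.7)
The plan is to argue by contradiction directly from (31) together with the indicator definition (18). First, I would assume the principle of bivalence to hold for the system, meaning that for some admissible pure state $|\Psi\rangle \neq 0$ the valuation $v_{|\Psi\rangle}$ is definable as a total function $\Sigma \to \{0,1\}$ by the rule (18), on the whole of $\Sigma = L(\mathcal{H})$.

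Next, I would translate this assumption into a statement about invariant subspaces. By (18), for every $\hat{P} \in \Sigma$ the vector $|\Psi\rangle$ must lie either in $\mathrm{ran}(\hat{P})$ or in $\mathrm{ker}(\hat{P})$, so that $\hat{P}|\Psi\rangle$ equals $|\Psi\rangle$ or the zero vector, respectively. In both cases $\hat{P}|\Psi\rangle$ is a scalar multiple of $|\Psi\rangle$, so the one-dimensional subspace $\mathbb{C}|\Psi\rangle$ is invariant under every projection operator in $\Sigma$. Equivalently, $\mathbb{C}|\Psi\rangle \in \mathrm{Lat}(\Sigma)$.

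Finally, I would invoke (31). Burnside's theorem forces $\mathrm{Lat}(\Sigma) = \{\mathrm{ran}(\hat{0}), \mathrm{ran}(\hat{1})\} = \{\{0\}, \mathcal{H}\}$, yet the ray $\mathbb{C}|\Psi\rangle$ matches neither member: it is not $\{0\}$ because $|\Psi\rangle \neq 0$, and it is not $\mathcal{H}$ because $\mathrm{dim}(\mathcal{H}) > 1$. This contradiction implies that at least one $\hat{P} \in \Sigma$ must have $|\Psi\rangle \notin \mathrm{ran}(\hat{P}) \cup \mathrm{ker}(\hat{P})$, so the truth value demanded by (18) is undefined on $\hat{P}$, and hence the principle of bivalence fails.

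The only real obstacle, and essentially what the whole argument delivers, is the translation from the pointwise rule (18) applied uniformly across all of $\Sigma$ to the single geometric condition that the ray spanned by $|\Psi\rangle$ is a common invariant subspace; once that translation is in place, Burnside's theorem closes the case at once. A secondary point worth noting is the mild assumption that $\mathrm{dim}(\mathcal{H}) > 1$, which is needed to exclude the degenerate case in which $\mathbb{C}|\Psi\rangle = \mathcal{H}$ is itself a member of the trivial lattice.
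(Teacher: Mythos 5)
Your argument is correct, but it takes a genuinely different route from the paper's. The paper reasons context-to-context: assuming $\sum_i v_{|\Psi\rangle}(\hat{P}_i^{(q)})=1$ holds in some maximal context $\Sigma^{(q)}$, it uses the triviality of $\mathrm{Lat}(\Sigma)$ to infer that some column space $\mathrm{ran}(\hat{P}_k^{(w)})$ from another context fails to be orthogonal to $\mathrm{ran}(\hat{P}_i^{(q)})$, and concludes that the proposition attached to $\hat{P}_k^{(w)}$ cannot be bivalent alongside those of $\Sigma^{(q)}$. You instead work pointwise with the state: totality of the rule (\ref{18}) forces $\hat{P}|\Psi\rangle \in \{|\Psi\rangle, 0\}$ for every $\hat{P}\in\Sigma$, so the ray $\mathbb{C}|\Psi\rangle$ is a common invariant subspace; since invariance passes to linear combinations and $\Sigma$ spans $L(\mathcal{H})$, this ray would be a nontrivial member of $\mathrm{Lat}(\Sigma)$, contradicting (\ref{31}) once $\dim(\mathcal{H})>1$. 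Your version is the tighter of the two: it makes the appeal to irreducibility fully explicit (the paper's step from ``not orthogonal'' to ``cannot be bivalent'' is left informal, since non-orthogonality alone does not preclude a \emph{particular} $|\Psi\rangle$ from lying in $\mathrm{ran}$ or $\mathrm{ker}$ of the offending projector), and it dispenses with the decomposition into maximal contexts entirely. What the paper's route buys in exchange is a closer tie to the contextuality narrative --- the sum-to-one condition (\ref{29}) and the interplay between distinct contexts $\Sigma^{(q)}$ and $\Sigma^{(w)}$ --- which is the form in which the Kochen--Specker obstruction is usually stated. One cosmetic caveat: the paper's $\mathrm{Lat}(\hat{P})$ is defined in (\ref{21}) as a fixed four-element family rather than the full invariant-subspace lattice, so your claim $\mathbb{C}|\Psi\rangle\in\mathrm{Lat}(\Sigma)$ should be read against the standard meaning of $\mathrm{Lat}$ (which is what Burnside's theorem actually concerns) rather than the paper's literal enumeration.
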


\begin{proof}
Suppose that $\sum_{i} v_{|\Psi\rangle}(\hat{P}_i^{(q)}) = 1$ for the certain $\Sigma^{(q)}$. Because it is irreducible, $\mathrm{Lat}(\Sigma)$ does not have any $\mathrm{ran}(\hat{P}_i^{(q)}) \in \mathrm{Lat}(\Sigma^{(q)})$. So, at least one nontrivial invariant subspace, say, $\mathrm{ran}(\hat{P}_k^{(w)}) \in \mathrm{Lat}(\Sigma^{(w)})$, where $w \neq q$, is not orthogonal to $\mathrm{ran}(\hat{P}_i^{(q)}) \in \mathrm{Lat}(\Sigma^{(q)})$. In consequence, the proposition associated with $\hat{P}_k^{(w)}$ cannot be bivalent alongside the propositions connected with $\hat{P}_i^{(q)} \in \Sigma^{(q)}$, i.e., $v_{|\Psi\rangle}(\hat{P}_k^{(w)}) \neq \{0,1\}$.
\end{proof}
\smallskip

\begin{corollary}
There is a collection of the projection operators relating to a system with a finite-dimensional Hilbert space, namely, $\Sigma^{\prime} \subset L(\mathcal{H})$, such that $\mathrm{Lat}(\Sigma^{\prime})$ contains no nontrivial elements.
\end{corollary}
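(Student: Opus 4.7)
The plan is to exhibit a concrete proper sub-collection $\Sigma'\subsetneq\Sigma$ of projection operators whose associative-algebra closure already coincides with all of $L(\mathcal{H})$, and then to reduce the statement about $\mathrm{Lat}(\Sigma')$ to the instance of Burnside's theorem already used in (\ref{31}).

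First I would fix an orthonormal basis $\{|e_1\rangle,\dots,|e_n\rangle\}$ of $\mathcal{H}$, where $n=\mathrm{dim}(\mathcal{H})\ge 2$, and put
\begin{equation*}
   \Sigma'
   \;=\;
   \bigl\{\,|e_i\rangle\langle e_i|\bigr\}_{i=1}^{n}
   \;\cup\;
   \bigl\{\,|f_{ij}\rangle\langle f_{ij}|\bigr\}_{1\le i<j\le n},
   \qquad
   |f_{ij}\rangle=\tfrac{1}{\sqrt{2}}\bigl(|e_i\rangle+|e_j\rangle\bigr).
\end{equation*}
This is a finite collection, whereas $\Sigma$ contains a projection operator for every one-dimensional subspace of $\mathcal{H}$ and is therefore uncountable, so $\Sigma'\subsetneq\Sigma\subset L(\mathcal{H})$ in a strong sense.

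Second, I would observe that any $U\in\mathrm{Lat}(\Sigma')$ is automatically invariant under the entire associative algebra $\mathcal{A}(\Sigma')$ generated by $\Sigma'$, because invariance is preserved under both linear combinations and compositions of operators. The decisive computation is the identity
\begin{equation*}
   \bigl(|e_i\rangle\langle e_i|\bigr)\,
   \bigl(|f_{ij}\rangle\langle f_{ij}|\bigr)\,
   \bigl(|e_j\rangle\langle e_j|\bigr)
   \;=\;
   \tfrac{1}{2}\,|e_i\rangle\langle e_j|,
\end{equation*}
which together with the diagonal projections $|e_i\rangle\langle e_i|$ exhibits every matrix unit $|e_i\rangle\langle e_j|$ inside $\mathcal{A}(\Sigma')$. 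Since the matrix units are a basis of $L(\mathcal{H})$, this yields $\mathcal{A}(\Sigma')=L(\mathcal{H})$, and hence $\mathrm{Lat}(\Sigma')=\mathrm{Lat}(L(\mathcal{H}))$.

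Finally, the statement (\ref{31}) applied to $\Sigma=L(\mathcal{H})$ gives $\mathrm{Lat}(L(\mathcal{H}))=\{\mathrm{ran}(\hat{0}),\mathrm{ran}(\hat{1})\}$, so the previous paragraph forces $\mathrm{Lat}(\Sigma')$ to contain no nontrivial element, which is precisely the corollary. The only real obstacle is the algebra-generation step, i.e.\ the verification that the chosen rank-one projections really do produce every off-diagonal matrix unit; this is a routine but index-heavy finite check and does not involve any conceptual difficulty beyond what is already present in the proof of Observation~3.
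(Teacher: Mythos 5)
Your proposal is correct, but it takes a genuinely different route from the paper's. The paper argues non-constructively: from the irreducibility statement (\ref{31}) it infers that there must exist two maximal contexts $\Sigma^{(q)}$ and $\Sigma^{(w)}$ whose column spaces are not mutually orthogonal, takes $\Sigma^{\prime}$ to be (in effect) their union, and asserts that the joint family of invariant subspaces is then trivial. You instead exhibit an explicit finite family of rank-one projections and show that the associative algebra it generates is all of $L(\mathcal{H})$ via the matrix-unit identity $(|e_i\rangle\langle e_i|)(|f_{ij}\rangle\langle f_{ij}|)(|e_j\rangle\langle e_j|)=\tfrac{1}{2}|e_i\rangle\langle e_j|$, so that $\mathrm{Lat}(\Sigma^{\prime})=\mathrm{Lat}(L(\mathcal{H}))=\{\{0\},\mathcal{H}\}$. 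Your computation checks out (the reversed product or the adjoint supplies the units with $i>j$, the diagonal units are generators, and a subspace invariant under a set of operators is indeed invariant under the unital algebra it generates). What your version buys is an explicit witness valid uniformly for every $\dim\mathcal{H}\ge 2$ and, more importantly, an actual verification of triviality: the paper's closing assertion that non-orthogonality of a single pair of ranges forces $\mathrm{Lat}(\Sigma^{(q)})\cap\mathrm{Lat}(\Sigma^{(w)})$ to be trivial is not justified as stated (for $\dim\mathcal{H}\ge 4$ two maximal contexts can share a common invariant subspace while still containing a non-orthogonal pair of ranges), whereas your algebra-generation step closes exactly that gap. What the paper's version buys is brevity and adherence to the maximal-context formalism of (\ref{30}); if you want to match that formalism literally, note that each $|f_{ij}\rangle\langle f_{ij}|$ can be completed to a maximal context by adjoining its orthogonal complement, which changes nothing about the common invariant subspaces since $\mathrm{Lat}(\hat{P})=\mathrm{Lat}(\hat{1}-\hat{P})$.
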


\begin{proof}
This follows directly from the version of Burnside's Theorem presented in (\ref{31}). Truly, since for the said system $\mathrm{Lat}(\Sigma)$ is irreducible, there must exist maximal contexts $\Sigma^{(q)}$ and $\Sigma^{(w)}$ whose nontrivial column spaces $\mathrm{ran}(\hat{P}_i^{(q)}) \in \mathrm{Lat}(\Sigma^{(q)})$ and $\mathrm{ran}(\hat{P}_k^{(w)}) \in \mathrm{Lat}(\Sigma^{(w)})$ are not orthogonal to each other. Correspondingly, the family of subspaces $\mathrm{Lat}(\Sigma^{\prime})$ invariant under both $\Sigma^{(q)}$ and $\Sigma^{(w)}$ would have no nontrivial invariant subspace.
\end{proof}
\smallskip

\section{Algebra of matrices on $\mathbb{C}^2$}  %{IV}

\noindent Let us demonstrate the application of Burnside’s Theorem to the set of the projection operators on the two dimensional Hilbert space $\mathcal{H} = \mathbb{C}^2$. Comprised of the eigenvectors of the Pauli matrices $\sigma_z$, $\sigma_x$ and $\sigma_y$ these projection operators are\smallskip

\begin{equation} \label{32} %{Eq.32}
   \hat{P}_{1}^{(z)}
   =
   \!\left[
      \begingroup\SmallColSep
      \begin{array}{r r}
         1 & 0 \\
         0 & 0
      \end{array}
      \endgroup
   \right]
   \,
   ,
   \,\,
   \hat{P}_{2}^{(z)}
   =
   \!\left[
      \begingroup\SmallColSep
      \begin{array}{r r}
         0 & 0 \\
         0 & 1
      \end{array}
      \endgroup
   \right]
   \;\;\;\;  ,
\end{equation}

\begin{equation} \label{33} %{Eq.33}
   \hat{P}_{1}^{(x)}
   =
   \!\frac{1}{2}
   \!\left[
      \begingroup\SmallColSep
      \begin{array}{r r}
         1 & 1 \\
         1 & 1
      \end{array}
      \endgroup
   \right]
   \,
   ,
   \,\,
   \hat{P}_{2}^{(x)}
   =
   \!\frac{1}{2}
   \!\left[
      \begingroup\SmallColSep
      \begin{array}{r r}
         1 & -1 \\
        -1 &  1
      \end{array}
      \endgroup
   \right]
   \;\;\;\;  ,
\end{equation}

\begin{equation} \label{34} %{Eq.34}
   \hat{P}_{1}^{(y)}
   =
   \!\frac{1}{2}
   \!\left[
      \begingroup\SmallColSep
      \begin{array}{r r}
         1 & -i \\
          i &  1
      \end{array}
      \endgroup
   \right]
   \,
   ,
   \,\,
   \hat{P}_{2}^{(y)}
   =
   \!\frac{1}{2}
   \!\left[
      \begingroup\SmallColSep
      \begin{array}{r r}
         1 & i \\
        -i  & 1
      \end{array}
      \endgroup
   \right]
   \;\;\;\;  .
\end{equation}
\smallskip

\noindent Since $\hat{P}_{1}^{(q)} \hat{P}_{2}^{(q)} = \hat{P}_{2}^{(q)} \hat{P}_{1}^{(q)} = \hat{0}$ and $\hat{P}_{1}^{(q)} + \hat{P}_{2}^{(q)} = \hat{1}$, they make up three maximal contexts $\Sigma^{(q)}$, namely,\smallskip

\begin{equation} \label{35} %{Eq.35}
   \Sigma^{(q)}
   =
   \left\{
      \hat{P}_{i}^{(q)}
   \right\}_{i = 1}^2
   \subset
   L(\mathbb{C}^2)
   \;\;\;\;  ,
\end{equation}
\smallskip

\noindent where $L(\mathbb{C}^2)$ denotes the collection of all linear transformations $\mathbb{C}^2 \rightarrow \mathbb{C}^2$ (i.e., the algebra over $\mathbb{C}^2$).\\

\noindent The invariant subspaces $ \mathrm{Lat}(\Sigma^{(q)})$ invariant under each $\hat{P}_{i}^{(q)} \in \Sigma^{(q)}$ take the form\smallskip

\begin{equation} \label{36} %{Eq.36}
   \mathrm{Lat}(\Sigma^{(z)})
   =
   \bigg\{
      \{0\}
      ,
      \left\{
         \!\left[
            \begingroup\SmallColSep
            \begin{array}{r}
               a \\
               0
            \end{array}
            \endgroup
         \right]\!
      \right\}
      ,
      \left\{
         \!\left[
            \begingroup\SmallColSep
            \begin{array}{r}
               0 \\
               a
            \end{array}
            \endgroup
         \right]\!
      \right\}
      ,
      \left\{
         \!\left[
            \begingroup\SmallColSep
            \begin{array}{r}
               0 \\
               b
            \end{array}
            \endgroup
         \right]\!
      \right\}
      ,
      \left\{
         \!\left[
            \begingroup\SmallColSep
            \begin{array}{r}
               b \\
               0
            \end{array}
            \endgroup
         \right]\!
      \right\}
      ,
      \mathbb{C}^2
   \bigg\}
   \;\;\;\;  ,
\end{equation}

\begin{equation} \label{37} %{Eq.37}
   \mathrm{Lat}(\Sigma^{(x)})
   =
   \bigg\{
      \{0\}
      ,
      \left\{
         \!\left[
            \begingroup\SmallColSep
            \begin{array}{r}
               a \\
               a
            \end{array}
            \endgroup
         \right]\!
      \right\}
      ,
      \left\{
         \!\left[
            \begingroup\SmallColSep
            \begin{array}{r}
                a \\
               -a
            \end{array}
            \endgroup
         \right]\!
      \right\}
      ,
      \left\{
         \!\left[
            \begingroup\SmallColSep
            \begin{array}{r}
                b \\
               -b
            \end{array}
            \endgroup
         \right]\!
      \right\}
      ,
      \left\{
         \!\left[
            \begingroup\SmallColSep
            \begin{array}{r}
                b \\
                b
            \end{array}
            \endgroup
         \right]\!
      \right\}
      ,
      \mathbb{C}^2
   \bigg\}
   \;\;\;\;  ,
\end{equation}

\begin{equation} \label{38} %{Eq.38}
   \mathrm{Lat}(\Sigma^{(z)})
   =
   \bigg\{
      \{0\}
      ,
      \left\{
         \!\left[
            \begingroup\SmallColSep
            \begin{array}{r}
               ia \\
                a
            \end{array}
            \endgroup
         \right]\!
      \right\}
      ,
      \left\{
         \!\left[
            \begingroup\SmallColSep
            \begin{array}{r}
                a \\
               ia
            \end{array}
            \endgroup
         \right]\!
      \right\}
      ,
      \left\{
         \!\left[
            \begingroup\SmallColSep
            \begin{array}{r}
                b \\
               ib
            \end{array}
            \endgroup
         \right]\!
      \right\}
      ,
      \left\{
         \!\left[
            \begingroup\SmallColSep
            \begin{array}{r}
               ib \\
                b
            \end{array}
            \endgroup
         \right]\!
      \right\}
      ,
      \mathbb{C}^2
   \bigg\}
   \;\;\;\;  ,
\end{equation}
\smallskip

\noindent where $a,b \in \mathbb{R}$.\\

\noindent Within each maximal context $\mathrm{Lat}(\Sigma^{(q)})$ the corresponding projection operators $\hat{P}_{i}^{(q)}$ are bivalent. For example, suppose the system is prepared in the state $|\Psi\rangle = [1,0]^{\mathrm{T}}$, then\smallskip

\begin{equation} \label{39} %{Eq.39}
   |\Psi\rangle
   =
   \!\left[
      \begingroup\SmallColSep
      \begin{array}{r}
          1 \\
          0
      \end{array}
      \endgroup
   \right]\!
   \in
   \left\{
      \begin{array}{l}
         \mathrm{ran}(\hat{P}_{1}^{(z)})
         =
         \left\{
            \!\left[
               \begingroup\SmallColSep
               \begin{array}{r}
                   a \\
                   0
               \end{array}
               \endgroup
            \right]\!
         \right\}
         \,
         \implies
         \,\,
         v_{|\Psi\rangle}(\hat{P}_{1}^{(z)})
         =
         1
         \\ % down
         \mathrm{ker}(\hat{P}_{2}^{(z)})
         =
         \left\{
            \!\left[
               \begingroup\SmallColSep
               \begin{array}{r}
                   b \\
                   0
               \end{array}
               \endgroup
            \right]\!
         \right\}
         \,
         \implies
         \,\,
         v_{|\Psi\rangle}(\hat{P}_{2}^{(z)})
         =
         0
      \end{array}
   \right.
   \;\;\;\;  .
\end{equation}
\smallskip

\noindent Since the Pauli matrices $\sigma_q$ form an orthogonal basis for the space $\mathbb{C}^2$, any matrix $\mathrm{M}^{2 \times 2} \in \mathbb{C}^2$ can be expressed as\smallskip

\begin{equation} \label{40} %{Eq.40}
   \mathrm{M}^{2 \times 2}
   =
   w\mathrm{I}^{2 \times 2}
   +
   \sum_{q = 1}^{3}
      u_q \sigma_q
   \;\;\;\;  ,
\end{equation}
\smallskip

\noindent where $w$ and $u_q$ are complex numbers, and $\mathrm{I}^{2 \times 2}$ is the identity matrix on $\mathbb{C}^2$.\\

\noindent Consequently, the collection of the maximal contexts $\Sigma = \{\Sigma^{(z)}, \Sigma^{(x)}, \Sigma^{(y)}\}$ contains all the projection operators on $\mathbb{C}^2$. As $L(\mathbb{C}^2)$ is the span of all such operators, $\Sigma = L(\mathbb{C}^2)$.\\

\noindent By Burnside's Theorem it must be then\smallskip

\begin{equation} \label{41} %{Eq.41}
   \mathrm{Lat}(\Sigma)
   =
   \mathrm{Lat}(\Sigma^{(z)})
   \cap
   \mathrm{Lat}(\Sigma^{(x)})
   \cap
   \mathrm{Lat}(\Sigma^{(y)})
   =
   \left\{
      \{0\}
      ,
      \mathbb{C}^2
   \right\}
   \;\;\;\;  ,
\end{equation}
\smallskip

\noindent which implies that the bivaluation $v_{|\Psi\rangle}\!: \{\Sigma^{(q)}\} \rightarrow \{0,1\}$ cannot be a total function.\\

\section{Concluding remarks}  %{5}

\noindent As it has been just shown, the Kochen-Specker theorem is the consequence of Burnside's theorem on the algebra of linear transformations on $\mathcal{H}$.\\

\noindent Indeed, according to the Kochen-Specker theorem, in a Hilbert space $\mathcal{H}$ of a finite dimension (greater than 3), it is impossible to assign to every projection operator in a set $\Sigma^{\prime}$ one of its eigenvalues, i.e., 1 or 0, in such a way that for any admissible state of the system $|\Psi\rangle$ the values $v_{|\Psi\rangle}(\hat{P}_{i}^{(w)})$ assigned to members $\hat{P}_{i}^{(w)}$ of a maximal context $\Sigma^{(w)} \subset \Sigma^{\prime}$ resolve to 1, that is, $\sum_{i}v_{|\Psi\rangle}(\hat{P}_{i}^{(w)}) = 1$.\\

\noindent On the other hand, the bivaluation of a projection operator is associated with the existence of its two nontrivial invariant subspaces. So, the inability to assign binary values, 1 or 0, to each projection operator in the set $\Sigma^{\prime}$ is the consequence of the fact that the family of subspaces $\mathrm{Lat}(\Sigma^{\prime})$ invariant under each maximal context in the set $\Sigma^{\prime}$ is irreducible, i.e., has no nontrivial invariant subspace.\\

\noindent In this way, contextuality (as an impossibility of assigning binary values to projection operators independently of their maximal contexts) is merely an inference from the fundamental theorem of noncommutative algebra, i.e., Burnside's Theorem.\\

\noindent It is worth mentioning that this theorem fails for finite dimensional vector spaces over the reals \cite{Shapiro}, such as the classical phase space $\Gamma$. This can be regarded as the algebraic reason for bivalence of classical mechanics.\\

\bibliographystyle{References}

\end{document}